\title{Automated Synthesis of Safe and Robust PID Controllers for Stochastic Hybrid Systems}
\author{
Fedor Shmarov\inst{1} \and 
 Nicola Paoletti\inst{2} \and 
 Ezio Bartocci\inst{3} \and
 Shan Lin\inst{4} \and
 Scott A. Smolka\inst{2} \and
 Paolo Zuliani\inst{1}}
\institute{School of Computing, Newcastle University, UK \\ 
		\email{\{f.shmarov,paolo.zuliani\}@ncl.ac.uk} 
	\and Department of Computer Science, Stony Brook University, NY, USA \\
		\email{\{nicola.paoletti,sas\}@cs.stonybrook.edu}
	\and Faculty of Informatics, TU Wien, Austria\\
		\email{ezio.bartocci@tuwien.ac.at}
        	\and Department of Electrical and Computer Engineering, Stony Brook University, NY, USA \\
\email{shan.x.lin@stonybrook.edu}
}
\def  \Nat      {\mathbb{N}}   % naturals
\def  \rats     {\mathbb{Q}}  % rationals
\def  \ft#1     {\footnote{#1} }
\def  \eg       {{\em e.g.}}
\def  \ie       {{\em i.e.}}
\newcommand{\goes}[1]{\rightarrow{#1}}
\newcommand{\reach}{{\bf Reach}}
\renewcommand{\phi}{\varphi}
\newcommand{\reachdepth}{l}
\newcommand{\hide}[1]{}
\renewcommand{\paragraph}[1]{\vspace*{4pt}\noindent\textit{#1:}}
\DeclareMathOperator*{\argmin}{arg\,min}
\DeclareMathOperator*{\argmax}{arg\,max}
\newcommand{\review}[1]{#1}
\begin{document}

\maketitle

\begin{abstract}
We present a new method for the automated synthesis of safe and robust 
Proportional-Integral-Derivative (PID) controllers for stochastic 
hybrid systems. 
Despite their widespread use in industry, no automated method currently exists for deriving a PID controller (or any other type of controller, for that matter) with safety and performance guarantees for such a general class of systems.  In particular, we consider hybrid systems with nonlinear dynamics (Lipschitz-continuous ordinary differential equations) and random parameters, and 
we synthesize PID controllers such that 
the resulting closed-loop systems satisfy safety and performance constraints given as probabilistic bounded reachability properties.  Our technique leverages SMT solvers over the reals and nonlinear differential equations to provide formal guarantees that the synthesized 
controllers satisfy such properties. These controllers are also robust by design since they minimize the probability of reaching an unsafe state in the presence of random disturbances. 
We apply our approach to the problem of insulin regulation for type~1 diabetes, synthesizing 
controllers with robust responses to large random meal disturbances, thereby enabling them to maintain blood glucose levels within healthy, safe ranges. 

\iffalse
We present a new method for the automated synthesis of safe and robust PID controllers for stochastic hybrid systems. 
Proportional-Integral-Derivative (PID) control is the most common feedback-control technique found in industry, and 
it is well understood for linear systems. However, synthesizing PID controllers for general nonlinear systems
remains a challenge. In this paper we develop an approach to synthesize PID controllers for hybrid 
systems with nonlinear dynamics (Lipschitz-continuous ordinary differential equations) and random parameters, while
ensuring that the controllers satisfy user-defined requirements. In particular, our algorithm leverages SMT 
solvers to guarantee rigorously that the synthesized controllers meet bounded safety and robustness properties.
We apply our approach to the problem of insulin regulation for type~1 diabetes, and we synthesize 
controllers that provide robust responses to large random meal disturbances, while maintaining the blood glucose within 
healthy, safe ranges. 
\fi

\end{abstract}

% !TEX root =  main.tex

\section{Introduction}
\label{sec:intro}

Proportional-Integrative-Derivative (PID) controllers
are among the most widely deployed and well-established
feedback-control techniques.  Application areas are diverse and include industrial control systems, flight controllers, robotic manipulators, and medical devices.  %Their success and popularity rest mainly on their simplicity of design and resulting performance benefits. 
%The PID control law is the sum of three terms (\emph{proportional}, 
%\emph{integral}, and \emph{derivative}), each one depending on 
%the difference (error value) between a desired \emph{setpoint} 
%and a measured process variable. 
%\todo{removed paragraph} 
%The \emph{proportional} (P) term and \emph{derivative} (D) term
%are responsible for the stability 
%and desired transient response of the system, while 
%the \emph{integral} (I) term accelerates the output of the 
%process towards the setpoint and ensures a robust process behavior
%eliminating residual steady-state errors.
The PID controller synthesis problem entails finding the values 
of its control parameters (proportional, integral and derivative gains) 
that are optimal in terms of providing stable feedback control to the 
target system (the plant) with desired response behavior.
%Despite the limited 
%range of the parameter space, the synthesis problem, 
%especially for nonlinear PID controllers~\cite{seraji1998new,armstrong2001new,su2005design}, 
%is challenging 
Despite the limited number of parameters, this problem is far from trivial,
due to the presence of multiple (and often conflicting) 
%complex 
performance criteria that a controller is required to meet 
(\eg, normal transient response, stability). 

Developing PID controllers for \emph{cyber-physical systems} is even more challenging because their dynamics are typically hybrid, nonlinear, and stochastic in nature. Moreover, it is imperative that the closed-loop controller-plus-plant system is safe (\ie, 
does not reach a bad state) and robust (\ie, exhibits desired behavior under a given range of disturbances).
%The problem becomes even more challenging for plants exhibiting 
%hybrid nonlinear stochastic dynamics, 
%such as those considered here in this paper. 
%\todo{(Shan) Could we share some insights on why it is challenging?}
To the best of our knowledge, however, the current techniques for synthesizing PID controllers
(see \eg,~\cite{su2005design,duong2012robust,fliess2013model}) simply ignore these issues and do not provide any formal guarantees about the resulting closed-loop system.

In this paper, we present a new framework for the automated synthesis of PID controllers for 
{\em stochastic hybrid systems} such that the resulting closed-loop system {\em provably} satisfies 
% (probabilistic) robustness and safety requirements.
a given (probabilistic) safety property in a robust way with respect to random disturbances. 
Specifically, we formulate and tackle two different, yet complementary, problems: \textit{controller synthesis}, 
\ie, find a PID controller that minimizes the probability of violating the property, thus ensuring robustness against random perturbations; and 
\textit{maximum disturbance synthesis}, \ie, find, for a given controller, the largest disturbance 
that the resulting control system can sustain without violating the property. 
To the best of our knowledge, we are the first to present a solution to these problems (see also the 
related work in Section~\ref{sec:related_work}) with formal guarantees. 

It is well known that safety verification is an inherently difficult 
problem for nonlinear hybrid systems \review{--- it is in general undecidable, 
hence it must be solved using approximation methods.} 
%it is undecidable, hence it must be solved in an approximate way.
Our technique builds on the frameworks of delta-satisfiability \cite{DBLP:conf/lics/GaoAC12} and probabilistic 
delta-reachability \cite{DBLP:conf/hvc/ShmarovZ16} to reason formally about nonlinear and stochastic dynamics. This enables us to circumvent undecidability issues by returning solutions with numerical guarantees up to an arbitrary user-defined precision. 

We express safety and performance constraints as probabilistic bounded reachability properties, and encode the synthesis problems as SMT formulae over ordinary differential equations. This theory adequately captures, besides the reachability properties, the hybrid nonlinear dynamics that we need to reproduce, and leverages appropriate SMT solvers~\cite{dreal,ProbReach} that can solve the delta-satisfiability problem for such formulae. 
% We encode our problems as SMT formulae over ordinary differential equations (ODEs), and solve them by employing 
% appropriate SMT solvers~\cite{dreal,ProbReach}.
 
We demonstrate the utility of our approach on an artificial pancreas case study, \ie\ the closed-loop insulin regulation for type~1 diabetes. In particular, we synthesize controllers that can provide robust responses to large random meal disturbances, while keeping the blood glucose level within healthy, safe ranges. 
%To this purpose, we formulate probabilistic specifications capturing both safety requirements and common performance criteria used in PID control~\cite{li2004cautocsd}.

To summarize, in this paper, we make the following main contributions:
\begin{itemize}
	\item We provide a solution to the \emph{PID controller synthesis} and \emph{maximum disturbance synthesis} problems using an SMT-based framework that supports hybrid plants with {\em nonlinear ODEs} and {\em random parameters}. 
	\item We \review{encode} in the framework safety and performance requirements, and state the corresponding formal guarantees for the {\em automatically synthesized} PID controllers. 
	\item We demonstrate the practical utility of our approach by synthesizing provably safe and robust controllers for an artificial pancreas model. 
\end{itemize}

%The rest of this paper is organized as follows. 
%Section~\ref{sec:problem} provides %requisite 
%background on hybrid systems and PID 
%control. %, and formally defines the synthesis problem we are considering.
%Section~\ref{sec:probreach} introduces 
%probabilistic delta-reachability, and Section~\ref{sec:implementation} 
%shows how to formulate and solve the synthesis problems within such 
%a formal framework. 
%Section~\ref{sec:evaluation} presents our case study
%centered around the design of a 
%safe and robust PID controller for the artificial pancreas. 
%Section~\ref{sec:related_work} examines related work, while Section~\ref{sec:conclusion}
%draws our conclusions and discusses future work.

% !TEX root =  main.tex

%\section{Problem Description}
\section{Background}
\label{sec:problem}

%\begin{defn}[Stochastic Hybrid System]
%describe formally the kind of systems we consider. It seems that we can reuse the definition of HVC 2016. So we can also calling them parametric hybrid systems. I wouldn't include here the goal state.
%\end{defn}

Hybrid systems extend finite-state automata by introducing continuous state spaces and continuous-time dynamics \cite{DBLP:conf/hybrid/AlurCHH92}.
They are especially useful when modeling systems that combine discrete and continuous behavior such as cyber-physical systems, including
biomedical devices (\eg, infusion pumps and pacemakers). In particular, continuous dynamics is usually expressed via (solutions of)
ordinary differential equations (ODEs). To capture a wider and more realistic family of systems, in this work we consider hybrid systems 
whose behavior depends on both {\em random} and {\em nondeterministic} parameters, dubbed {\em stochastic parametric hybrid systems (SPHS)} 
\cite{DBLP:conf/hvc/ShmarovZ16}.  In particular, our synthesis approach models both the target system and its controller as a single SPHS. 
It is thus important 
to adopt a formalism that allows random {\em and} nondeterministic parameters: the former are used to model system disturbances and plant uncertainties, while the
latter are used to constrain the search space for the controller synthesis.

\begin{definition}{\bf (SPHS)}\cite{DBLP:conf/hvc/ShmarovZ16}\label{def:phs}
A Stochastic Parametric Hybrid System is a tuple $H=<Q,\Upsilon,X,P,Y,R,$ $\mathsf{jump},\mathsf{goal}>$, 
where
\begin{itemize}
\item $Q = \{q_0, \cdots , q_m\}$ is the set of modes (discrete states) of the system;
% \item $\Upsilon = \{(q,q^{\prime}): q,q^{\prime} \in Q\}$ is the set of possible mode transitions (discrete dynamics);
\item $\Upsilon \subseteq \{(q,q^{\prime}): q,q^{\prime} \in Q\}$ is the set of possible mode transitions (discrete dynamics);
\item $X = [u_1, v_1] \times \cdots \times [u_n, v_n]\times [0,T] \subset \mathbb{R}^{n+1}$ is the continuous system state space; 
\item $P = [a_{1}, b_{1}] \times \cdots \times [a_{k}, b_{k}] \subset \mathbb{R}^k$ is the parameter space of the system, which is
	represented as $P = P_R \times P_N$, where $P_R$ is domain of random parameters and $P_N$ is the domain of nondeterministic 
parameters (and either domain may be empty);
\item $Y = \{{{\bf y}_{q}({\bf p})} : q \in Q, {\bf p}\in X\times P\}$ is the continuous dynamics where ${\bf y}_q:X\times P \rightarrow X$; 
\item $R = \{{\bf g}_{(q,q^{\prime})}({\bf p}) : (q, q^{\prime}) \in \Upsilon, {\bf p}\in X\times P\}$ is the set of `reset' functions ${\bf g}_{(q,q^{\prime})}: X\times P \rightarrow X\times P$ defining the continuous state at time $t = 0$ in mode $q^{\prime}$ after taking the transition from mode $q$.
\end{itemize}
and predicates (or relations)
\begin{itemize}
        \item $\mathsf{jump}_{(q, q^\prime)}({\bf p})$ is true iff the discrete transition $(q, q^{\prime}) \in \Upsilon$ may occur upon reaching state $({\bf p},q)\in X\times P\times Q$,
        \item $\mathsf{goal}_q({\bf p})$ is true iff\, ${\bf p}\in X\times P$ is a goal state for mode $q$.
\end{itemize}
\end{definition}
The $\mathsf{goal}$ predicate is the same for all modes and is used to define the safety requirements for the controller synthesis 
(see (\ref{eq:prop_1}) in Section \ref{sec:implementation}). We assume that the SPHS has an initial state $({\bf x_0}, q_0) \in X\times Q$.
 The continuous dynamics $Y$ is given as an initial-value problem with Lipschitz-continuous ODEs over a bounded time domain $[0,T]$, which have a unique solution for any given initial condition ${\bf p} \in X\times P$ (by the Picard-Lindel{\"o}f theorem). System parameters are treated as variables with zero derivative, and thus are part of the initial conditions. Finally, parameters may be random discrete/continuous \review{(capturing system disturbances and uncertainties)} with an 
associated probability measure, and/or nondeterministic \review{(\ie\ the parameters to synthesize)}, in which case only their bounded domain is known. 
%Importantly, random parameters can be defined over arbitrary continuous distributions.

% !TEX root =  main.tex

%\subsection{Probabilistic Delta-Reachability}
\paragraph{\bf Probabilistic Delta-Reachability}
\label{sec:probreach}
For our purposes we need to consider {\em probabilistic bounded} reachability: what is the {\em probability that a SPHS} (which models system {\em and} controller) {\em reaches a goal state in a {\bf finite} number of discrete transitions}? 
\hide{
Reachability is a hard problem: (non-probabilistic) unbounded reachability is undecidable even for timed 
automata --- hybrid systems in which \review{the only variable with continuous dynamics is time} \cite{DBLP:conf/hybrid/AlurCHH92}. 
To simplify the problem, one may bound the number of allowed transitions. 
We consider the bounded reachability definition of~\cite{DBLP:conf/hvc/ShmarovZ16}. 

For a SPHS and reachability depth $l\in \Nat$, the set
$Paths(l)$ is formed by all paths of length $l+1$, obtained by ``unrolling'' $\Upsilon$, whose first element is the initial mode and last element 
is the goal mode.
%\begin{defn}{\bf (Bounded Reachability)}\cite{DBLP:conf/hvc/ShmarovZ16}\label{def:reach}
The {\em $\reachdepth$-step bounded reachability property} for a parameter set $B \subseteq P$ is the formula $\reach(B) :=  \bigvee_{k=0}^l  \reach(B,k)$, where
%\begin{equation*} \label{eq:l-step}
%\begin{split}
% & \reach(B) :=  \bigvee_{k=0}^l  \reach(B,k)\text{, where}\\
% & \reach(B,k) :=  \exists^{B} {\bf p}, \exists^{[0, T]} t_{0}, \cdots, \exists^{[0, T]} t_{|\pi|-1} : 
% \bigvee_{\pi \in Paths(k)} \Big[ \big({\bf x}_{\pi(0)}^{t} = {\bf y}_{\pi(0)}({\bf p}, t_{0}) \big) \wedge \\
% & \bigwedge_{i = 0}^{|\pi|-2} \big[\text{jump}_{(\pi(i),\pi(i+1))}({\bf x}_{\pi(i)}^{t})  
%\wedge \big({\bf x}_{\pi(i+1)}^{t} = {\bf y}_{\pi(i+1)}({\bf g}_{(\pi(i),\pi(i+1))}({\bf x}_{\pi(i)}^{t}), t_{i+1})\big) \big]  
% \wedge \text{goal}_{\pi(|\pi|-1)}({\bf x}_{\pi(|\pi|-1)}^{t}) \Big]
%\end{split}
%\end{equation*}
\footnotesize
\begin{equation*} \label{eq:l-step}
\begin{split}
%&\reach(B) :=  \bigvee_{k=0}^l  \reach(B,k)\text{, where}\\
 &\reach(B,k) :=  \exists^{B} {\bf p}, \exists^{[0, T]} t_{0}, \cdots, \exists^{[0, T]} t_{|\pi|-1} : 
 \bigvee_{\pi \in Paths(k)} \Big[ \big({\bf x}_{\pi(0)}^{t} = {\bf y}_{\pi(0)}({\bf p}, t_{0}) \big) \wedge \bigwedge_{i = 0}^{|\pi|-2}  \\
 &\big[\text{jump}_{(\pi(i),\pi(i+1))}({\bf x}_{\pi(i)}^{t})  
\wedge \big({\bf x}_{\pi(i+1)}^{t} = {\bf y}_{\pi(i+1)}({\bf g}_{(\pi(i),\pi(i+1))}({\bf x}_{\pi(i)}^{t}), t_{i+1})\big) \big]  
 \wedge \text{goal}_{\pi(|\pi|-1)}({\bf x}_{\pi(|\pi|-1)}^{t}) \Big]
 \end{split}
\end{equation*}
\normalsize
where $ \exists^Z z $ is a shorthand for $\exists z\in Z$.
%\end{defn}
}
Reasoning about reachability in nonlinear hybrid systems entails deciding first-order formulae over the reals.
It is well known that such formulae are undecidable when they include, \eg, trigonometric functions.  
A relaxed notion of satisfiability ($\delta$-satisfiability \cite{DBLP:conf/lics/GaoAC12}) can be utilized 
to overcome this hurdle, and SMT solvers such as dReal \cite{dreal} and iSAT-ODE \cite{DBLP:conf/atva/EggersFH08} 
can ``$\delta$-decide'' a wide variety of real functions, including transcendental functions and solutions of nonlinear ODEs. 
(Essentially, those tools implement solving procedures that are sound and complete up to a given arbitrary precision.)
% Moved to Related Work
%\review{and are based on a combination of interval constraint propagation, safe numeric integration of ODEs, and conflict-driven clause learning for manipulating the Boolean structure of the formula.  Alternative approaches to solving SMT problems over nonlinear real arithmetic include the complete, yet computationally expensive, cylindrical algebraic decomposition method~\cite{collins1975quantifier} implemented in solvers like Z3~\cite{de2008z3}, as well as a recent method~\cite{cimatti2017invariant} based on the incremental linearization of nonlinear functions. However, none of these support ODEs and transcendental functions.}

A probabilistic extension of bounded reachability in SPHSs was presented in \cite{DBLP:conf/hvc/ShmarovZ16}, which basically 
boils down to measuring the {\em goal set}, \ie\, the set of parameter points for which the system satisfies the reachability property. Recall that the set of goal states for a SPHS is described by its $\mathsf{goal}$ predicate. 
When nondeterministic parameters are present, 
the system may exhibit a range of reachability probabilities, depending on the value of the nondeterministic parameters. That is,
the reachability probability is given by a function ${\bf Pr}(\nu) = \int_{G(\nu)}  d\mathbb{P}$, defined for any $ \nu\in P_N $,
where $G(\nu)$ is the goal set and $\mathbb{P}$ is the probability measure of the random
parameters. The ProbReach tool utilizes the notion of $\delta$-satisfiability when computing the goal set, thereby computing 
{\em probabilistic $\delta$-reachability} \cite{ProbReach}. In particular, ProbReach computes probability {\em enclosures} for 
the range of function {\bf Pr} over parameter sets $\mathcal{N} \subseteq P_N$, \ie, intervals $[a,b]$ such that
\begin{equation}\label{def:enc}
	\forall \nu \in \mathcal{N}\quad {\bf Pr}(\nu) \in [a,b]
\end{equation}
where $0\leqslant a \leqslant b \leqslant 1$ (but $a=b$ can only be achieved in very special cases, of course). 
To solve our synthesis problems we leverage ProbReach's formal approach and statistical approach for the computation of 
probability enclosures.

\paragraph{Formal Approach} 
ProbReach guarantees that the returned enclosures satisfy (\ref{def:enc}) {\em formally} and {\em numerically} \cite{ProbReach}. 
In particular, any enclosure either has a desired width $\epsilon\in\rats^+$, or the size of the corresponding parameter box $\mathcal{N}\subseteq P_N$
is smaller than a given lower limit.
The computational complexity of this approach increases exponentially with the number of parameters, so it might not be feasible 
for large systems.

\paragraph{Statistical Approach} It trades computational complexity with correctness guarantees \cite{DBLP:conf/hvc/ShmarovZ16}, 
by solving approximately the problem of finding a value $\nu^*$ for the nondeterministic parameters that minimizes (maximizes) 
the reachability probability {\bf Pr}:
\begin{equation}\label{def:preach-opt}
	%\nu^* \in \argmin_{\nu\in P_N}\, {\bf Pr}(\nu) \hspace{6ex} \left( \nu^* \in \argmax_{\nu\in P_N}\, {\bf Pr}(\nu)\right).
	\nu^* \in \argmin_{\nu\in P_N}\, {\bf Pr}(\nu) \hspace{6ex} \big( \nu^* \in \argmax_{\nu\in P_N}\, {\bf Pr}(\nu) \big)\ .
\end{equation}
ProbReach returns an estimate $\hat{\nu}$ for $\nu^*$ and a probability enclosure $[a,b]$ that are {\em statistically} and {\em numerically} 
guaranteed to satisfy:
\begin{equation}\label{def:preach-opt1}
	\text{Prob}({\bf Pr}(\hat{\nu}) \in [a,b]) \geqslant c
\end{equation}
where $0<c<1$ is an arbitrary confidence parameter. 
%The guarantees are enabled by the use of verified Monte Carlo sampling via dReal and 
%statistical tests that ensure the given confidence level. 
In general, the size of the enclosure $[a,b]$ cannot be arbitrarily chosen due to undecidability reasons, although
it may be possible to get tighter enclosures by increasing the numerical precision of $\delta$-reachability.
Also, the statistical approach utilizes a Monte Carlo (Cross Entropy) method, so it cannot guarantee that $\hat{\nu}$ is a global 
optimum, \ie, that satisfies (\ref{def:preach-opt}).

%The dual problem involving maximization is similarly specified and solved by ProbReach, as well. 

%{\color{red} what about we write a variant of formula (1) of HVC 2016 where we have a single goal predicate $\gamma$ (not a goal for each mode) and the goal can be reached by any state in the path? i.e.  $$\phi(\pi, \mathbf{p}) = \ldots \wedge \bigvee_{i=0}^{n-1} \gamma(\mathbf{x}^t_{\pi(i)}).$$}

%For PHS $H$ with parameters $\mathbf{p}$, reachability depth $l$, and goal predicate $\gamma$, we say that $H \models F^l \ \gamma$ if there exists a path $\pi$ of $H$ of length $l$ such that $\phi(\pi, \mathbf{p})$ holds. 

\paragraph{\bf PID control} 
%\subsection{PID Control}
A PID control law is the sum of three kinds of control actions, \textit{Proportional, Integral and Derivative actions}, each of which depends on the \textit{error value}, $e$, \ie\ the difference between a target trajectory, or \textit{setpoint }$sp$, and the measured output of the system $y$. At time $t$, the resulting control law $u(t)$ and error $e(t)$ are given by:
%\begin{align}
%u(t) = & \underbrace{K_p  e(t)}_\text{P} + \underbrace{K_i \int_0^t e(\tau) \ d\tau}_\text{I} +  \underbrace{K_d \dot{e}(t)}_\text{D} \label{eq:PID}\\
%e(t) = & sp(t) - y(t)
%\end{align}
\begin{equation}
u(t) = \underbrace{K_p  e(t)}_\text{P} + \underbrace{K_i \int_0^t e(\tau) \ d\tau}_\text{I} +  \underbrace{K_d \dot{e}(t)}_\text{D}, \qquad e(t) = sp(t) - y(t) \label{eq:PID}
\end{equation}
where constants $K_p$, $K_i$ and $K_d$ are called \textit{gains} and fully characterize the PID controller. 

%The rationale behind the design of a PID controller (also depicted in Figure \ref{fig:PID}) is simple yet effective: The P term of (\ref{eq:PID}) accounts for the present value of the error, being proportional to the current value of $e$. The I term captures past errors, since it accumulates error values over time. The D term considers predicted future errors, being based on the current change rate of $e$.  
%P and D terms are responsible for the stability (no unbounded oscillations) 
%and desired transient response of the system, while the I term accelerates the output of the 
%process towards the setpoint and ensures a robust process behavior eliminating residual steady-state errors.

The above control law assumes a continuous time domain, which is quite common in the design stage of a PID controller. Alternatively, PID control can be studied over discrete time, where the integral term is replaced by a sum and the derivative by a finite difference. 
\review{However, the analysis of discrete-time PID controllers is impractical for non-trivial time bounds because they induce a discrete transition for each time step, and thus, they directly affect the unrolling/reachability depth required for the bounded reachability analysis, which is at the core of our synthesis method.}

%  \review{However, as explained in Section \ref{sec:probreach}, in our settings the analysis of discrete-time PID controllers is impractical for non-trivial time bounds. }
% %For this reason, in this study we focus on continuous-time PID controllers.} 

% \review{Note that the reachability depth directly influences the number of variables and the length of the formula. 
% This implies that discrete-time PID controllers, requiring a depth equals to the number of discrete time steps, 
% are impractical for this purpose and can be efficiently analyzed only for short time horizons. For this reason, in this study we focus 
% on continuous-time PID controllers that instead do not require a discrete transition for each time step, and thus, have no effect on the reachability depth.} \todo{we don't speak of complexity because it's hard to characterize w.r.t. number of steps, but this explanation should be fine}

\section{PID Control of Hybrid Plants}
\begin{wrapfigure}{r}{0.6\textwidth}
\centering
\vspace*{-1.5em}
\includegraphics[width=.6\textwidth]{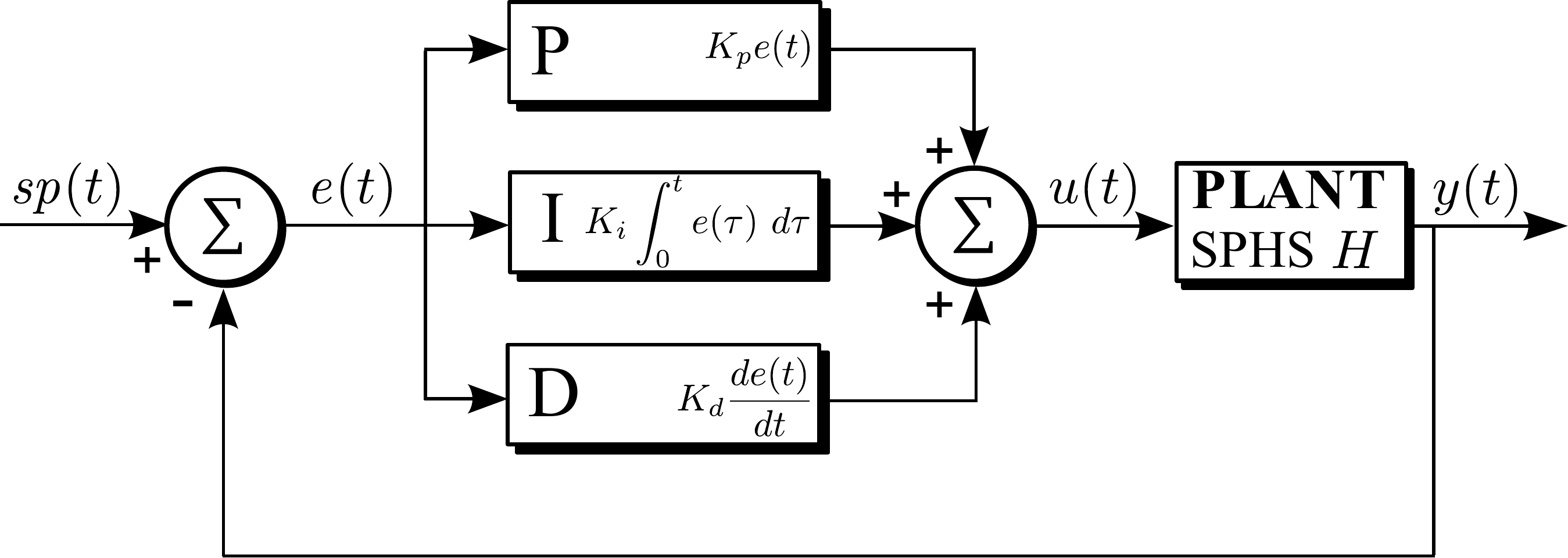}
\caption{PID control loop}\label{fig:PID}
\vspace*{-1.5em}
\end{wrapfigure}
We formally characterize the system  given by the feedback loop between a plant SPHS $H$ and a PID controller, so called {\em closed-loop system} (see Figure \ref{fig:PID}).
 \review{We would like to stress that we support plants specified as hybrid systems, given that a variety of systems naturally exhibit hybrid dynamics (regardless of the controller). For instance, in the artificial pancreas case study of Section~\ref{sec:evaluation}, discrete modes are used to describe different meals, while the glucose metabolism is captured by a set of ODEs.}

We assume that the controller is an additive input and can manipulate only one of the state variables of $H$, $x_u$, and that for each mode $q$ of $H$, there is a measurement function $h_q$ that provides the output of the system at $q$. To enable synthesis, we further assume that the PID controller gains $\mathbf{k} = (K_p,K_i,K_d)$ are (unknown) nondeterministic parameters with domain $K$. To stress this dependency, below we use the notation $u(\mathbf{k},t)$ to denote the PID control law of Equation~\ref{eq:PID}.

\begin{definition}[PID-SPHS control system] Let $H= \langle Q$, $\Upsilon$, $X$, $P$, $Y$, $R$, $\mathsf{jump},\mathsf{goal}\rangle$ be a plant SPHS, and let $u$ be a PID controller (\ref{eq:PID}) with gain parameters $\mathbf{k} \in K \subset \mathbb{R}^3$. For $q \in Q$, let $h_q: X \goes{} \mathbb{R}$ be the corresponding measurement function. Let $x_u$ be the manipulated state variable, $i_u \in \{1,\ldots,n\}$ be the corresponding index in the state vector, and $sp: [0,t] \goes{} \mathbb{R}$ be the desired setpoint. The {\em PID-SPHS control system} with plant $H$ is the SPHS $H \parallel u = \langle Q, \Upsilon, X, P\times K, Y', R', \mathsf{jump},\mathsf{goal}\rangle$, where
\begin{itemize}
\item $Y' = \{ \mathbf{y}'_q(\mathbf{p},\mathbf{k},t) : q \in Q,  \mathbf{p} \in X \times P, \mathbf{k} \in K, t \in [0,1]\}$, where the continuous dynamics of each state variable with index $i=1,\ldots, n$ is given by
\[
\mathbf{y}'_{q,i}(\mathbf{p},\mathbf{k},t) = 
\begin{cases}
\mathbf{y}_{q,i}(\mathbf{p},t) + u(\mathbf{k},t) & \text{if $i = i_u$}\\
\mathbf{y}_{q,i}(\mathbf{p},t) & \text{otherwise}
\end{cases}
\]
where $\mathbf{y}_{q,i}$ is the corresponding continuous dynamics in the plant SPHS $H$, and $u(\mathbf{k},t)$ is the PID law described in  (\ref{eq:PID}), with error $$e(t) = sp(t) - h_q(\mathbf{y}'_{q}(\mathbf{p},\mathbf{k},t)); \text{ and}$$
\item $R' = \{ \mathbf{g}'_{(q,q')}(\mathbf{p},\mathbf{k},t) : (q,q') \in \Upsilon,  \mathbf{p} \in X \times P, \mathbf{k} \in K, t \in [0,T]\}$, where $\mathbf{g}'_{(q,q')}(\mathbf{p},\mathbf{k},t) = \mathbf{g}_{(q,q')}(\mathbf{p},t)$, i.e.\ the reset $\mathbf{g}'_{(q,q')}$ is not affected by the controller parameters $\mathbf{k}$ and is equal to the corresponding reset of the plant $H$, $\mathbf{g}_{(q,q')}$.
\end{itemize}
\end{definition}

\review{In other words, the PID-SPHS control system is obtained by applying the same PID controller to the continuous dynamics of each discrete mode of the hybrid plant, meaning that the PID-SPHS composition produces the same number of modes of the plant SPHS. } 
We remark that external disturbances as well as plant uncertainties can be encoded through appropriate random variables in the plant SPHS.

%\input{implementation}
% !TEX root =  main.tex

\section{Safe and Robust PID Controller Synthesis}
\label{sec:implementation}
In this section we first illustrate the class of synthesis properties of interest, able to capture relevant safety and performance objectives. Second, we formulate the PID control synthesis problem and the related problem of {maximum disturbance synthesis}. 
%, which, given a concrete PID controller, allows for formal verification of the system robustness. 

We remark that our main objective is designing PID controllers with formal \textbf{safety guarantees}, \ie\ a given set of bad states should never be reached by the system, or reached with very small probability. 
Similarly, we aim to synthesize controllers able to guarantee, by design, prescribed performance levels. For instance, the designer might need to keep the settling time within strict bounds, or avoid large overshoot. 

To this purpose, we consider two well-established performance measures, the fundamental index ($FI$) and the weighted fundamental index ($FI_w$)~\cite{levine1996control,li2004cautocsd}\footnote{$FI$ and $FI_w$ are also also known as ``integral of square error'' and ``integral of square time weighted square error'', respectively.}, defined by:
\begin{equation}\label{eq:FI}
FI(t) = \int_0^t \left(e(\tau)\right)^2 d\tau \qquad FI_w(t) = \int_0^t \tau^2 \cdot \left(e(\tau)\right)^2 d\tau.
\end{equation}
$FI$ and $FI_w$ quantify the cumulative error between output and set-point, thus providing a measure of 
how much the system deviates from the desired behavior. Crucially, they also indirectly capture key transient response measures such as {steady-state error}, \ie\ the value of $e(t)$ when $t\goes{}\infty$, 
%rise time, \ie\ the time needed for the system output to rise from 10\% to 90\% of its steady-state value, 
or maximum overshoot, \ie\ the highest deviation from the setpoint\footnote{In PID theory, transient response measures are often evaluated after applying a step function to the set-point. However, we do not restrict ourselves to this scenario.}. In fact, small $FI$ values typically indicate good transient response (\eg\ small overshoot or short rise-time), while $FI_w$ weighs errors with the corresponding time, in this way stressing steady state errors.

We now formulate the main reachability property for the synthesis of safe and robust controllers, which is expressed by predicate $\mathsf{goal}$. The property captures the set of bad states that the controller should avoid (predicate $\mathsf{bad}$) as well as performance constraints through upper bounds $FI^{\max}, FI_w^{\max} \in \mathbb{R}^+$ on the allowed values of $FI$ and $FI_w$, respectively, and is given by: 
\begin{equation}\label{eq:prop_1}
\mathsf{goal} = \mathsf{bad} \vee (FI > FI^{\max}) \vee ( FI_w > FI_w^{\max}).
\end{equation}
% where $\mathsf{bad}$ is a predicate describing the set of unsafe states that we seek to avoid . 
In the case that the designer is not interested in constraining $FI$ or $FI_w$, we allow $FI^{\max}$ and $FI_w^{\max}$ to be set $+\infty$.

We now introduce the PID controller synthesis problem that aims at synthesizing the control parameters yielding the minimal probability of reaching the goal (\ie\ the undesired states). Importantly, this corresponds to minimizing the effects on the plant of random disturbances, that is, to \textit{maximizing the robustness} of the resulting system. 

We remark that the unrolling depth and the $\mathsf{goal}$ predicate are implicit in the reachability probability function $\bf Pr$ (see Section~\ref{sec:problem}). 

\begin{problem}[PID controller synthesis]\label{prob:controller_synth}
Given a PID-SPHS control system $H \parallel u$ with unknown control parameters $\mathbf{k} \in K$, find the parameters $\mathbf{k}^*$ that minimize the probability of reaching the goal:
\[
\mathbf{k}^* \in \argmin_{\mathbf{k} \in K} \ {\bf Pr}(\mathbf{k}).
\]
\end{problem}
For the duality between safety and reachability, Problem~\ref{prob:controller_synth} is equivalent to synthesizing controllers that maximize the probability that $\neg \mathrm{goal}$ always holds.
If $H \parallel u$ has no random parameters (but only nondeterministic parameters), then Problem~\ref{prob:controller_synth} is equivalent to synthesizing, if it exists, a controller that makes goal unsatisfiable. 
\hide{
\begin{remark}
For the duality between safety and reachability, Problem~\ref{prob:controller_synth} is equivalent to synthesizing controllers that maximize the probability that $\neg \mathrm{goal}$ always holds.
\end{remark}

\begin{remark}
If $H \parallel u$ has no random parameters (but only nondeterministic parameters), then Problem~\ref{prob:controller_synth} is equivalent to synthesizing, if it exists, a controller that makes goal unsatisfiable. 
% This case could be handled by the formal approach in ProbReach, although it is highly dimensional and hence currently infeasible.
\end{remark}
}

As previously explained, the control parameters $\mathbf{k}$ that we aim to synthesize must be defined as nondeterministic parameters in the SPHS $H \parallel u$. Crucially, we can employ both the formal and the statistical approach alike to solve this problem. 

%Since the formal approach is based on the iterative partitioning of the parameter space, it is not possible to know the exact minimizing parameter. Specifically, the algorithm selects the synthesized controller parameter $\mathbf{k}^*$ \textit{as the midpoint of the parameter box whose enclosure has the least midpoint}. However, through the following proposition, we show that this solution can be made arbitrarily precise by the user-defined parameter $\epsilon$, which determines the desired length of the enclosure. 
In general, it is not possible to know the exact minimizing parameter because of the inherent undecidability. 
However, using the formal approach one could select the synthesized controller parameter $\mathbf{k}^*$ \textit{as the midpoint of the parameter box whose enclosure has the least midpoint}. Through the following proposition, we show that this solution can be made arbitrarily precise when all of the returned enclosures have length $\leq \epsilon$, the user-defined 
parameter that determines the desired length of the enclosure as explained in Section \ref{sec:probreach} (however, this cannot be always guaranteed).
\begin{proposition}\label{prop:formal}
Suppose that the returned enclosures by the formal approach have all length $\leq \epsilon$.
Let $P^*$ be the actual minimal probability, and let $\mathbf{k}^*$ be the solution of the formal approach for Problem \ref{prob:controller_synth}. Then, it holds that
$$\mathbf{Pr}(\mathbf{k}^*) < P^* + \frac{3}{2} \epsilon\  .$$  
\end{proposition}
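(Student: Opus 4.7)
The plan is to unfold the definitions of the formal approach and chain four elementary bounds on $\mathbf{Pr}(\mathbf{k}^*)$: one from the box membership of $\mathbf{k}^*$, one from the width hypothesis, one from the optimality of the chosen midpoint, and one from the fact that the true minimizer must itself lie in \emph{some} box whose enclosure must contain $P^*$.

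First I would fix notation. The formal approach produces a partition of the nondeterministic parameter space $K$ into finitely many boxes $B_1,\ldots,B_N$ together with enclosures $[a_i,b_i]$ satisfying $\mathbf{Pr}(\mathbf{k})\in[a_i,b_i]$ for every $\mathbf{k}\in B_i$ (this is~(\ref{def:enc})), and by hypothesis $b_i-a_i\leq\epsilon$ for all $i$. Let $m_i=(a_i+b_i)/2$ be the midpoint of the $i$-th enclosure. Then, per the definition preceding the proposition, $\mathbf{k}^*$ is the midpoint of $B_{i^*}$ where $i^*\in\argmin_i m_i$. Let $\mathbf{k}^{\dagger}\in K$ realize (or approximate arbitrarily well) the minimum $P^*=\min_{\mathbf{k}\in K}\mathbf{Pr}(\mathbf{k})$, and let $j$ be such that $\mathbf{k}^{\dagger}\in B_j$; then $a_j\leq P^*\leq b_j$.

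Next I would carry out the chain of bounds. Since $\mathbf{k}^*\in B_{i^*}$, we have $\mathbf{Pr}(\mathbf{k}^*)\leq b_{i^*}=m_{i^*}+(b_{i^*}-a_{i^*})/2\leq m_{i^*}+\epsilon/2$. By the optimality of $i^*$ we have $m_{i^*}\leq m_j$, and trivially $m_j\leq b_j$. Finally, from $a_j\leq P^*$ and $b_j-a_j\leq\epsilon$ we obtain $b_j\leq P^*+\epsilon$. Chaining these together yields
\[
\mathbf{Pr}(\mathbf{k}^*)\;\leq\;m_{i^*}+\tfrac{\epsilon}{2}\;\leq\;m_j+\tfrac{\epsilon}{2}\;\leq\;b_j+\tfrac{\epsilon}{2}\;\leq\;P^*+\tfrac{3}{2}\epsilon,
\]
and the inequality is strict because the bound $m_j\leq b_j$ is strict whenever the enclosure has positive width (and if it does not, then $\mathbf{Pr}(\mathbf{k}^*)\leq P^*+\epsilon<P^*+\tfrac{3}{2}\epsilon$ by a similar but even simpler argument).

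There is no real obstacle here; the only points that deserve care are (i) justifying that a box $B_j$ containing the minimizer exists (which follows from the partition covering $K$, and otherwise by passing to a sequence of approximate minimizers when $P^*$ is an infimum rather than a minimum, since the enclosure bound~(\ref{def:enc}) is uniform over $B_j$), and (ii) noting that the proof uses only the midpoint-selection rule and the width hypothesis, so the bound is deliberately loose --- a tighter analysis using $m_j\leq(P^*+b_j)/2$ in fact gives $P^*+\epsilon$, but $\tfrac{3}{2}\epsilon$ is the clean statement that covers both the strict and the degenerate case uniformly.
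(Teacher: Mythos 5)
Your proof is correct and follows essentially the same route as the paper's: both arguments combine the enclosure property, the width hypothesis, and the minimality of the selected midpoint, with your inequality $m_{i^*}\leq m_j$ playing exactly the role of the paper's bound $P^\bot_m - P^\bot_M \leq \epsilon/2$ (which the paper also derives from midpoint-minimality, just phrased as a best-case/worst-case split). Your closing observation is also accurate: the refinement $m_j \leq (P^*+b_j)/2 \leq P^*+\epsilon/2$ shows that the constant $\tfrac{3}{2}\epsilon$ in the statement is not tight and that $\mathbf{Pr}(\mathbf{k}^*)\leq P^*+\epsilon$ already holds, a bound the paper only reaches in its ``best case.''
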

\begin{proof}
See Appendix \ref{app:proof1}.
\end{proof}

On the other hand, the statistical algorithm returns an over-approximation $\hat{P}$ of the minimum probability, $c$-confidence interval $[\hat{P}]$ such that $\hat{P} \in [\hat{P}]$, and synthesized parameters $\mathbf{k}^*$ whose reachability probability is included in $[\hat{P}]$ with probability at least $c$, as per Equations \ref{def:preach-opt} and \ref{def:preach-opt1}.

%Therefore, our method also requires in input the solver precision $\delta$, confidence level $c$, \note{ accuracy $\xi$ \ldots }but these parameters are specific to the solution method and thus, not part of the problem formulation. 

%\subsection{Synthesis algorithm}
%is probability optimization and parameter synthesis already published?

Below, we define the maximum disturbance synthesis problem, aimed at finding, given a concrete controller, the maximum disturbance value that the resulting control system can support without violating a given property. This problem is complementary to the PID synthesis problem, since it allows the designer to formally evaluate the robustness of a known controller, possibly synthesized in a previous step. 
Specifically, we assume that the disturbance is represented by a vector of nondeterministic parameters $\mathbf{d}$ in the plant SPHS, and that $\mathbf{d}$ ranges over some bounded domain~$D$. 

\begin{problem}[Maximum disturbance synthesis]\label{prob:dist_synth}
Given a PID-SPHS control system $H \parallel u$ with {\em known} control parameters $\mathbf{k}^* \in K$ and {\em unknown} disturbance $\mathbf{d} \in D$, and a probability threshold $p$, find the highest disturbance $\mathbf{d}^*$ for which the probability of reaching the goal does not exceed $p$, \ie\ such that:
\[
\mathbf{d}^* = \max \left\lbrace \mathbf{d} \in D \mid \mathbf{Pr}(\mathbf{d}) \leq p \right\rbrace.
\]
\end{problem}
For the duality between safety and reachability, the probability of reaching $\mathrm{goal}$ is below $p$ if and only if the probability that $\neg\mathrm{goal}$ always holds is above $1-p$.
If $H \parallel u$ has no random parameters (but only nondeterministic parameters), then Problem~\ref{prob:dist_synth} reduces
to finding the largest disturbance for which the PID-SPHS system either reaches or does not reach the goal. 

\hide{
\begin{remark}
For the duality between safety and reachability, the probability of reaching $\mathrm{goal}$ is below $p$ if and only if the probability that $\neg\mathrm{goal}$ always holds is above $1-p$.
\end{remark}

\begin{remark}
If $H \parallel u$ has no random parameters (but only nondeterministic parameters), then Problem~\ref{prob:dist_synth} reduces
to finding the largest disturbance for which the PID-SPHS system either reaches or does not reach the goal. 
\end{remark}
}

Note that the maximum disturbance synthesis problem is fundamentally different from the controller synthesis problem, because the kind of parameters that we seek to synthesize represent external factors that cannot be controlled. That is why we are interested in knowing the maximum (worst-case) value they can attain such that the requirements are met with given probability constraints. In particular, we restrict to upper-bound constraints because we want to limit the probability of reaching a given goal (undesired) state, even though lower bound constraints can be equally supported by the synthesis method.

Problem~\ref{prob:dist_synth} is solved through the formal approach, which allows identifying the parameters boxes whose probability enclosures are guaranteed to be below the threshold $p$, \ie, they are intervals of the form $[P_{\min},P_{\max}]$ with $P_{\max}\leq p$. Then, the synthesized parameter $\mathbf{d}^*$ is selected as the highest value among all such parameter boxes. 

It follows that the returned $\mathbf{d}^*$ is guaranteed to meet the probability constraint ($\mathbf{Pr}(\mathbf{d}^*) \leq p$), but, due to the iterative refinement, $\mathbf{d}^*$ under-estimates the actual maximum disturbance. In this sense, $\mathbf{d}^*$ is a safe under-approximation. The reason is that there might exist some ``spurious'' parameter boxes $[\mathbf{d}]$ (not returned by the algorithm), \ie\ such that $p$ lies within the corresponding probability enclosure $[P]$ and $[\mathbf{d}]$ contains a disturbance value $\mathbf{d}'$ that is higher than the synthesized $\mathbf{d}^*$ and that, at the same time, meets the constraint $\mathbf{Pr}(\mathbf{d}') \leq p$.

The statistical approach cannot be applied in this case, because it relies on the Cross Entropy method, which is designed for estimation and optimization purposes and is not suitable for decision problems. Note indeed that the probability bound $\leq p$ induces a Boolean (and not quantitative) property.

% !TEX root =  main.tex

\section{Case Study: Artificial Pancreas}
\label{sec:evaluation}

We evaluate our method on the closed-loop control of insulin treatment for Type 1 diabetes (T1D), also known as the \textit{artificial pancreas (AP)}~\cite{hovorka2011closed}. Together with model predictive control, PID is the main control technique for the AP~\cite{steil2011effect,huyett2015design}, and is found as well in commercial devices~\cite{kanderian2014apparatus}. 
%For their potential to improve the life of diabetic patients by overcoming manual insulin dosing, AP systems have been extensively studied in the past 20 years~\cite{hovorka2011closed}, but only lately cleared for clinical trials~\cite{ly2015day,kovatchev2017feasibility} and commercialization \cite{kanderian2010apparatus,kanderian2014apparatus}. 

The main requirement for the AP is to keep blood glucose (BG) levels within tight, healthy ranges, typically between 70-180 mg/dL, in order to avoid \textit{hyperglycemia} (BG above the healthy range) and \textit{hypoglycemia} (BG below the healthy range). While some temporary, postprandial hyperglycemia is typically admissible, hypoglycemia leads to severe health consequences, and thus, it should be avoided as much as possible. This is a crucial safety requirement, which we will incorporate in our synthesis properties.

The AP consists of a continuous glucose monitor that provides glucose measurements to a control algorithm regulating the amount of insulin injected by the insulin pump. 
%running inside an insulin pump, which is a wearable device that injects insulin inside the body. 
The pump administers both \textit{basal insulin}, a low and continuous dose that covers insulin needs outside meals, and \textit{bolus insulin}, a single high dose for covering meals. 

Meals represent indeed the major disturbance in insulin control, which is why state-of-the-art commercial systems\footnote{MINIMED 670G by Medtronic \url{https://www.medtronicdiabetes.com/products/minimed-670g-insulin-pump-system}} can only regulate basal insulin and still require explicit meal announcements by the patient for bolus insulin. To this purpose, robust control methods have been investigated \cite{parker2000robust,szalay2014linear,paolettiCMSB17}, since they are able to minimize the impact of input disturbances (in our case, meals) on the plant (the patient). Thus, they have the potential to provide full closed-loop control of bolus insulin without manual dosing by the patient, which is inherently error-prone and hence, dangerous. Our method for the synthesis of safe and robust controllers is therefore particularly meaningful in this case. 

\subsection{Plant Model} 
To model the continuous system's dynamics (\eg, glucose and insulin concentrations), we consider the well-established nonlinear model of Hovorka {\em et al}.~\cite{Hovorka04}.
% , a nonlinear ODE model consisting of 8 state variables. Model and parameters are reported in Appendix \ref{app:model}. 

At time $t$, the input to the system is the infusion rate of bolus insulin, $u(t)$, which is computed by the PID controller.  The system output $y(t)$ is given by state variable $Q_1(t)$ (mmol), describing the amount of BG in the accessible compartment, \ie\ where measurements are taken, for instance using finger-stick blood samples. For simplicity, we did not include a model of the continuous glucose monitor (see \eg~\cite{wilinska2010simulation}) that instead measures glucose in the tissue fluid, but we assume continuous access to blood sugar values. 
The state-space representation of the system is as follows:
\begin{equation}
\dot{\mathbf{x}}(t) = {\bf F}\left(\mathbf{x}(t), u(t), D_G \right), \qquad y(t) = Q_1(t) \label{eq:state_space_form}
\end{equation}
where $\mathbf{x}$ is the 8-dimensional state vector that evolves according to the nonlinear ODE system $\bf{F}$ (see Appendix~\ref{app:model} for the full set of equations and parameters). The model assumes a single meal starting at time $0$ and consisting of an amount $D_G$ of ingested carbohydrates. Therefore, parameter $D_G$ represents our input disturbance. 

Instead of the BG mass $Q_1(t)$, in the discussion of the results we will mainly evaluate the BG concentration $G(t) = Q_1(t)/V_G$, where $V_G$ is the BG distribution volume. 

The error function of the PID controller is defined as $e(t) = sp - Q_1(t)$ with the constant set point $sp$ corresponding to a BG concentration of $110$ mg/dL.  
% \paragraph{PID controller encoding} 
% The error function is defined as $e(t) = sp - Q_1(t)$ with the constant set point 
% $sp = \frac{110 \cdot V_G}{18}$, which corresponds to a blood glucose concentration of $110$ mg/dL. Parameter $V_G$ is the glucose distribution volume and is given by $V_G = 0.16 \cdot BW$, where $BW$ (kg) is the body weight. 
Multiple meals can be modeled through a stochastic parametric hybrid system with one mode for each meal. In particular, we consider a one-day scenario consisting of three random meals (breakfast, lunch and dinner), resulting in the SPHS of Figure~\ref{fig:model-phs}.

\begin{figure}
\centering
\scalebox{0.9}{
\begin{tikzpicture}
    % Draw the states
    \node[state, draw=none] at (-3, 0)  (init)    {};
    \node[state] at (0, 0)  (m1) {\footnotesize Meal 1};
	\node[state] at (4, 0)  (m2) {\footnotesize Meal 2};
    \node[state] at (8, 0)  (m3) {\footnotesize Meal 3};
    %\node[state] at (3.5, 0)  (u) {Unsafe};

	% Connect the states with arrows
    \draw[every loop,
              auto=right,
              >=latex]
    (init) edge[pos=0.5, sloped, below] node {\scriptsize $D_G:=D_{G_1}$} (m1)          
    (m1) edge node[pos=0.5, above, sloped] {\scriptsize $t=T_1$} node[pos=0.5, below, sloped] {\scriptsize $(D_G:=D_{G_2})\wedge(t:=0)$} (m2)
    (m2) edge node[pos=0.5, above, sloped] {\scriptsize $t=T_2$} node[pos=0.5, below, sloped] {\scriptsize $(D_G:=D_{G_3})\wedge(t:=0)$} (m3);
\end{tikzpicture}
}
\caption{Stochastic parametric hybrid system modelling a scenario of 3 meals over 24 hours. Above each edge, we report the corresponding jump conditions, below, the resets.}
\label{fig:model-phs}
\end{figure}
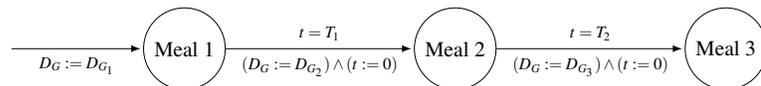

The model features five random, normally-distributed parameters: the amount of carbohydrates of each meal,
$D_{G_1}\sim\mathcal{N}(40,10)$, $D_{G_2}\sim\mathcal{N}(90,10)$ and $D_{G_3}\sim\mathcal{N}(60,10)$, 
and the waiting times between meals, $T_1 \sim \mathcal{N}(300,10)$ 
and $T_2 \sim \mathcal{N}(300,10)$.

A meal containing $D_{G_1}$ grams of carbohydrates is consumed at time 0. When the time in the first mode
reaches $T_1$ minutes the system makes a transition to the next mode $Meal~2$ where the value of the 
variable $D_G$ is set to $D_{G_2}$ and the time is reset to 0. Similarly, the system transitions from
mode $Meal~2$ to $Meal~3$, resetting variables $D_G$ and $t$ to $D_{G_3}$ and 0, respectively. All remaining variables are not reset at discrete transitions. 

% The initial state of the continuous dynamics in mode $Meal~1$ is the system's steady state from Table \ref{table:steady}.

% The format of the model specification in ProbReach does not allow
% defining integrals explicitly.
% As a result, we encode the integrals in (\ref{eq:PID}), (\ref{eq:FI}) and (\ref{eq:wFI}) indirectly, 
% by introducing three ODEs (as shown in (\ref{eq:int-encoding}) whose solutions are the corresponding integrals:
% \begin{equation} \label{eq:int-encoding}
% \dot{e}_{int}(t) = sp - Q_1(t); ~\dot{FI}(t) = e^2(t); ~\dot{FI_w}(t) =t^2e^2(t).
% \end{equation}
% Therefore, the PID control law (\ref{eq:PID}) can now be encoded as 
% \[
% u(t) = K_p e(t) + K_i e_{int}(t) + K_d \dot{e}(t).
% \]

\paragraph{Basal insulin and initial state} The total insulin infusion rate is given by $u(t) + u_b$ where $u(t)$ is the dose computed by the PID controller, and $u_b$ is the basal insulin. As typically done, the value of $u_b$ is chosen in order to guarantee a steady-state BG value of $Q_1 = sp$, and the steady state thus obtained is used as the initial state of the system. 

We denote with $C_0$ the basal controller that switches off the PID controller and applies only $u_b$
(\ie, $K_p$, $K_i$ and $K_d$ are equal to 0). 

\subsection{Experiments}
We apply the formal and statistical techniques of ProbReach to synthesize 
%the basal insulin  infusion rate $u_b$, 
the controller parameters $K_p$, $K_d$ and $K_i$
(Problem~\ref{prob:controller_synth}) and the maximum safe disturbance $D_G$ (Problem~\ref{prob:dist_synth}), 
considering the probabilistic reachability property of Section \ref{sec:implementation}. 
All experiments in this section were conducted on a 32-core (Intel Xeon 2.90GHz) Ubuntu 16.04 machine,
and the obtained results for the synthesized controllers are summarized in Table~\ref{table:result-pid}. We also validate and assess performance of the controllers over multiple random instantiations of the meals, which is reported in Figure~\ref{fig:perf_eval}.

\subsubsection{PID controller synthesis}
Typical healthy glucose levels vary between 4 and 10 mmol/L. Since avoiding hypoglycemia ($G(t) < 4$ mmol/L) is the main safety requirement of the artificial pancreas, while (temporary) hyperglycemia can be tolerated and is inescapable after meals, we will consider a BG range of $[4, 16]$ for our safety properties. In this way we protect against both hypoglycemia and very severe levels of hyperglycemia.

Given that the basal insulin level is insufficient to cover meal disturbances, the basal controller $C_0$ prevents hypoglycemia but causes severe hyperglycemia when a large meal is consumed ($D_G > 80$) or when the BG level is not low enough by the time the next meal is consumed (see Figure \ref{fig:perf_eval}). 

We used the statistical engine of ProbReach to synthesize several controllers (see Table \ref{table:result-pid}), over domains $K_d \in [-10^{-1},0]$, 
$K_i \in [-10^{-5},0]$ and $K_p \in [-10^{-3},0]$, which minimize the probability of reaching a bad state at any time instant in the modes $Meal~1$, $Meal~2$ and $Meal~3$ (reachability depth of 0, 1 or 2, respectively). 

% by obtaining the values 
% for $K_p$, $K_i$ and $K_d$ (defined over the domains $K_d \in [-10^{-1},0]$, 
% $K_i \in [-10^{-5},0]$ and $K_p \in [-10^{-3},0]$)
% minimizing the probability of reaching the {\em bad} state at some time point 
% within $[0,T_1]$, $[0,T_2]$ and $[0, (1,440 - T_1 - T_2)]$ in the modes $Meal~1$, $Meal~2$ and $Meal~3$ 
% respectively (reachability depth of 0, 1 or 2, respectively).

The set of unsafe glucose ranges is captured by predicate $\mathsf{bad} = G(t)\not\in[4,16]$. Controller $C_1$ was synthesized considering only safety requirements, corresponding to the reachability specification $\mathsf{goal} = \mathsf{bad}$ (see Equation~\ref{eq:prop_1}). On the other hand, controllers $C_2$, $C_3$ and $C_4$ were obtained taking into account also performance constraints, by using the default specification (\ref{eq:prop_1}): 
$\mathsf{goal} = \mathsf{bad} \vee (FI > FI^{\max}) \vee ( FI_w > FI_w^{\max})$. Thresholds $FI^{\max}$ and $FI_w^{\max}$ have been set to gradually stricter values, respectively to $3.5\times 10^6$ and $70\times 10^9$ for $C_2$, $3\times 10^6$ and $50\times 10^9$ for $C_3$, and $2.7\times 10^6$ and $30\times 10^9$ for $C_4$. 

% the robustness criteria
% with the {\em bad} state defined as 
% $(FI>3.5\times 10^6)\vee(FI_w>70\times 10^9)\vee(G(t)\not\in[4,16])$, 
% $(FI>3\times 10^6)\vee(FI_w>50\times 10^9)\vee(G(t)\not\in[4,16])$ and 
% $(FI>2.7\times 10^6)\vee(FI_w>30\times 10^9)\vee(G(t)\not\in[4,16])$, respectively. 

\begin{table}
\footnotesize
\centering

\begin{tabular}{|c||c|c|c|c||c|c||c|c|}
\hline
\# & $K_d$ {\scriptsize $(\times 10^{2})$} & $K_i$ {\scriptsize $(\times 10^{7})$} & $K_p$ {\scriptsize $(\times 10^{4})$} & $CPU_{syn}$ & $P$ & $CPU_P$ & $D_{G_1}^{max}$& $CPU_{max}$\\
\hline
\hline
$C_0$ &0 & 0 & 0 & 0 & [0.97322,1] & 176 & 69.4 & 2,327 \\
$C_1$ & -6.02 & -3.53 & -6.17 & 92,999 & [0.19645,0.24645] & 4,937 & 88.07 & 3,682 \\
$C_2$ & -5.73 & -3.00 & -6.39 & 156,635 & [0.31307,0.36307] & 64,254 & 87.62 & 3,664 \\
$C_3$ & -6.002 & -1.17 & -6.76 & 98,647 & [0.65141,0.70141] & 59,215 & 88.23 & 3,881 \\
$C_4$ & -6.24 & -7.55 & -5.42 & 123,726 & [0.97149,1] & 11,336 & 88.24 & 3,867 \\
\hline
\end{tabular}
\vspace{1ex}
\caption{Results of PID controller synthesis, where: $\#$ -- name of the synthesized controller, 
$K_d$, $K_i$ and $K_p$ -- synthesized values of the gain constants characterizing the corresponding controller (Problem 1),
$CPU_{syn}$ -- CPU time in seconds for synthesizing the controller parameters,
$P$ -- 99\%-confidence interval for the reachability probability,
$CPU_P$ -- CPU time in seconds for computing $P$ for synthesized controller, %obtaining the corresponding confidence intervals,
$D_{G_1}^{max}$ -- synthesized maximum meal disturbance for which the system never reaches the unsafe state, 
%maximum meal size for which the system never reaches the unsafe state, and
$CPU_{max}$ -- CPU time in seconds for obtaining $D_{G_1}^{max}$.}
\label{table:result-pid}
\end{table}

Due to the high computational complexity of the artificial pancreas model, 
the controller synthesis was performed in two steps. First, the values of $K_p$, $K_i$ and $K_d$ 
were synthesized using a coarse precision (\ie, desired width for confidence intervals $P$) for computing the probability 
estimates during the nondeterministic parameter search. Second, the confidence intervals for the obtained controllers
were computed with a higher precision. The values of $CPU_{syn}$ and $CPU_{P}$ in Table \ref{table:result-pid}
represent CPU times used for solving these two steps. 
The high computation times are due to the fact
that the solvers incorporated by ProbReach solve ODEs in a guaranteed manner which
is, for general Lipschitz-continuous ODEs, a PSPACE-complete problem, and thus,
it is the main bottleneck of the implemented algorithms.

Besides $C_0$ that unsurprisingly yields the highest probability of safety violation (highest $P$ for the reachability probability), results in Table~\ref{table:result-pid} evidence 
that controllers $C_1, \ldots, C_4$ fail to maintain the safe state with increasingly higher probability. As we shall see in more detail later, this behaviour is mostly due to the performance constraints that become harder and harder to satisfy. 

% Also note that the synthesised controllers can fail to maintain the safe state with the corresponding probability defined by the confidence intervals $P$. Thus, they can be sometimes unsafe as it will be show in a deterministic scenario in Section \ref{sec:eval-perf}.

\subsubsection{Maximum disturbance synthesis}

We solve Problem~\ref{prob:dist_synth} for each of the obtained controllers in Table \ref{table:result-pid}. 
We consider a domain of $[0,120]$ for the maximum meal disturbance, and apply the formal 
approach of ProbReach for synthesizing the maximum size $D_{G_1}^{max}$ of the first meal, such that, 
given any disturbance $D_{G_1} \in [0,D_{G_1}^{max}]$, the system does not reach the unsafe state 
within 12 hours. Note that this corresponds to setting the probability threshold $p$ of Problem~\ref{prob:dist_synth} to 0. Since we are interested in just one meal, we consider a reachability depth of 0 (path length of 1) for the bounded reachability property.

The results in Table \ref{table:result-pid} indicate that applying a PID controller increases
the size of the allowed meal from approximately 69g of the basal controller to about 88g, and at the same time, the difference between the synthesized controllers is negligibly small. 

Although introducing a controller does not increase the maximum disturbance dramatically 
with respect to the basal case, a PID control decreases the BG level sufficiently enough 
so that a subsequent meal of similar size 
can be consumed without the risk of experiencing severe hyperglycemia. In contrast,
$C_0$ does not bring the glucose level low enough before the following meal.

Note that, being normally distributed with mean 90 g, the second random meal exceeds such obtained maximum disturbances, which explains why the synthesized controllers fail with some probability to avoid unsafe states. 

% Also, note that the maximum meal size was computed based on the fact that
% the meal is consumed when the system (\ref{eq:model-odes}) is in its steady state. 
% Thus, the same controllers are capable of covering larger ``second" meals as will be shown next. %in Section~\ref{sec:eval-perf}.

%Thus, the maximum disturbance value cannot be used for indicating the safety of the controller in hybrid scenario.

% \begin{wraptable}{r}{0.4\textwidth}
% \centering
% \footnotesize
% \begin{tabular}{|c||c|c|c|}
% \hline
% \# & Safety & $FI\times10^{-6}$ & $FI_w\times10^{-9}$ \\
% \hline
% \hline
% % $C_0^1, C_0^2$ & Unsafe & 26.2335 & 847.5063 \\
% $C_0$ & Unsafe & 26.2335 & 847.5063 \\
% %$C_1$ & Safe & 3.89437 & 114.49821 \\
% $C_1$ & Unsafe & 3.95773 & 81.61823\\
% $C_2$ & Safe & 3.96117 & 74.90655\\
% $C_3$ & Unsafe & 4.02032 & 54.99539\\
% $C_4$ & Safe & 3.93632 & 158.27325\\
% \hline
% \end{tabular}
% \caption{Evaluation of the synthesized PID controllers on three meals of 50, 100 and 70 grams consumed in 300-minute intervals over 24 hours.}
% \label{table:result-performance}
% \vspace*{-1.5em}
% \end{wraptable}
\subsubsection{Performance and safety evaluation} \label{sec:eval-perf}
In this experiment, we evaluate safety and performance of the controllers by simulating 1,000 instantiations of the random meals. Such obtained glucose profiles and statistics are reported in Figure~\ref{fig:perf_eval}. No hypoglycemia episode ($G < 4$) was registered.

Plots evidence that all four synthesized controllers ($C_1,\ldots, C_4$) perform dramatically better than the basal controller $C_0$, which stays, on the average, $23.59\%$ of the time in severe hyperglycemia (see index $t_{\mathsf{bad}}$). In particular, all the traces simulated for $C_0$ violate the safe BG constraints $G \in [4,16]$ (100\% value of $\%_{\mathsf{bad}}$).  

On the other hand, controllers $C_1,\ldots, C_4$ violate safe BG constraints for 17-22\% of their traces, but this happens only for a very short while (no more than 0.45\% of the time) after the second (the largest) meal. This comes with no surprise since we already formally proven that the second meal exceeds the allowed maximum meal disturbance. 

$C_0$ has the worst performance in terms of $FI$ and $FI_w$, with mean $FI$ and $FI_w$ values (indices $\overline{FI}$ and $\overline{FI_w}$, resp.) significantly larger than those of $C_1,\ldots, C_4$. Among the synthesized controllers, $C_3$ has the best steady-state behavior (as visible in Figure~\ref{fig:perf_eval}, plot d), keeping the glucose level very close to the set point towards the end of the simulation. $C_3$ yields indeed the best mean $FI_w$ value (index $\overline{FI_w}$), while the worse steady-state behavior is observed for $C_4$. On the other hand, mean $FI$ values are very similar, meaning that $C_1,\ldots, C_4$ maintain the BG levels equally 
far from the set point on the average. 

One would expect $C_4$ to have the best performance in terms of $FI_w$, since it was synthesized with the stricter $FI_w$ constraint ($FI_w^{\max} = 30\times 10^9$). This constraint is, however, too strong to be satisfied, as demonstrated by the 100\% value of index $\%_{FI_w > FI_w^{\max}}$ (see Figure~\ref{fig:perf_eval}), implying all traces fail to satisfy $FI_w \leq FI_w^{\max}$. In general, we observe that strengthening the performance constraints leads to higher chances of violating them (see the last three indices of Figure~\ref{fig:perf_eval}). We conclude that performance constraints (and their violation) largely contribute to the reachability probabilities computed by ProbReach (see Table \ref{table:result-pid}) for $C_2, C_3$ and $C_4$, whose traces violate $FI$ or $FI_w$ constraints for 28\%, 67\%, and 100\% of the times, respectively.

\begin{figure}
\centering
\subfigure[$C_0$]{\includegraphics[width=.19\textwidth]{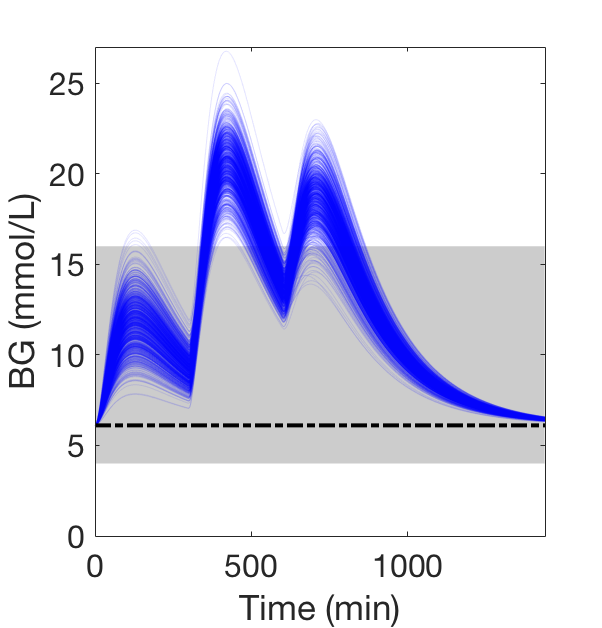}}
\subfigure[$C_1$]{\includegraphics[width=.19\textwidth]{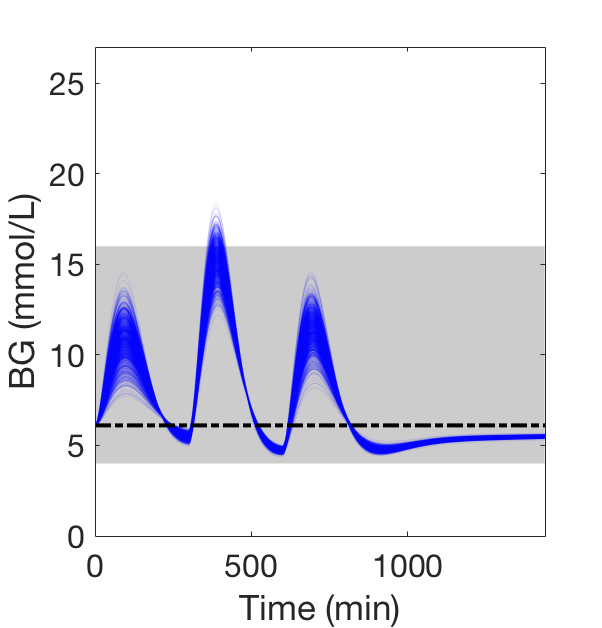}}
\subfigure[$C_2$]{\includegraphics[width=.19\textwidth]{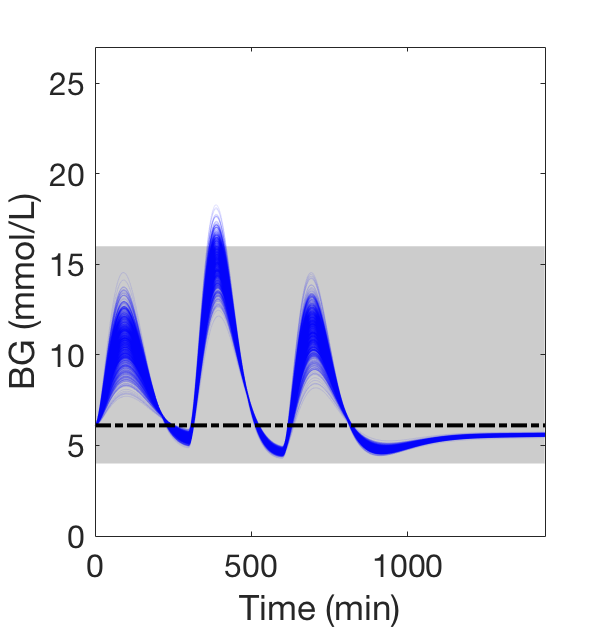}}
\subfigure[$C_3$]{\includegraphics[width=.19\textwidth]{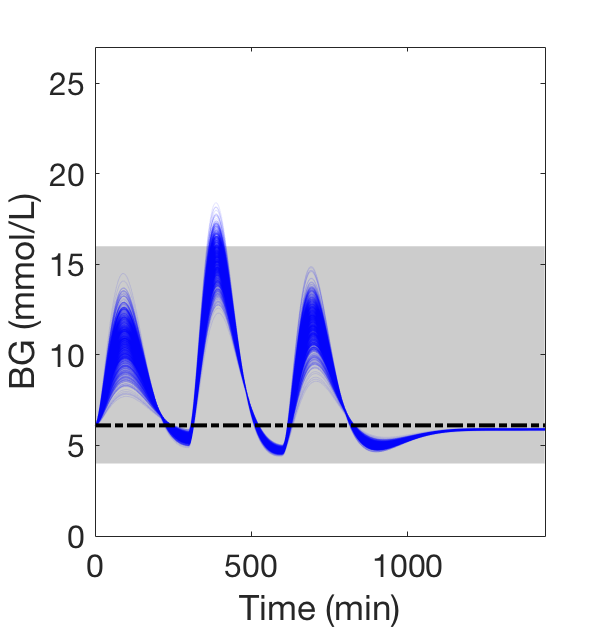}}
\subfigure[$C_4$]{\includegraphics[width=.19\textwidth]{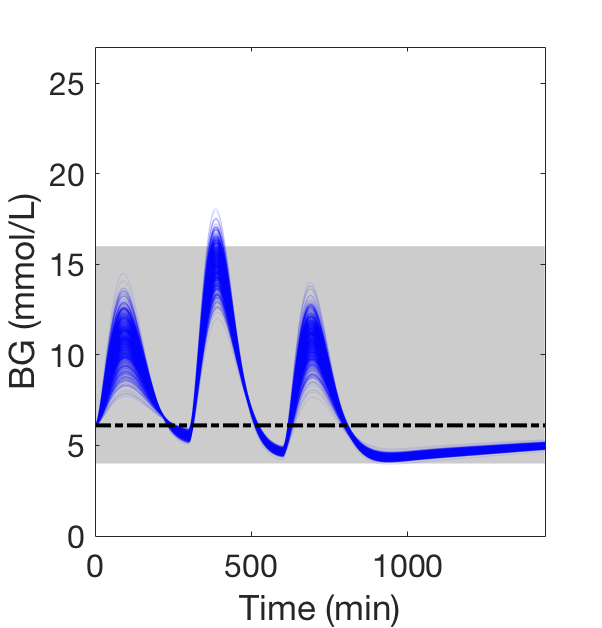}}\\
\footnotesize
\begin{tabular}{r|ccccccc}
& $t_{\mathsf{bad}}$ & 
$\%_{\mathsf{bad}}$ &
$\overline{FI}$ {\scriptsize($\times 10^{-6}$)} & 
$\overline{FI_w}$  {\scriptsize($\times 10^{-9}$)} & 
$\%_{FI > FI^{\max}}$ & $\%_{FI_w > FI_w^{\max}}$ & 
$\%_{FI > FI^{\max} \vee FI_w > FI_w^{\max}}$ \\ \hline
$C_0$ & 23.59\% & 100\% & 20.27 & 653.89 & NA & NA & NA\\
$C_1$ & 0.45\% & 22\% & {\bf 3.21} & 66.32 & NA & NA & NA\\
$C_2$ & 0.45\% & 21.4\% & 3.21 & 60.91 & {\bf 28.5\%} & {\bf 14\%}& \textbf{28.5\%} \\
$C_3$ & 0.51\% & 24.2\% & 3.24 & {\bf 44.93} & 67.2\% & 21.7\% & 67.2\% \\
$C_4$ & {\bf 0.35\%} & {\bf 17.3\%} & 3.21 & 129.05 & 86.5\% & 100\% & 100\% \\ \hline
\end{tabular}
\caption{BG profiles simulated for 1,000 random meals (shaded blue lines). Grey areas indicate healthy BG ranges ($G \in [4,16]$). Dashed black lines indicate the ideal setpoint. $t_{\mathsf{bad}}$: mean proportion of time where $G \not\in [4,16]$ (all traces yielded $G > 4$, \ie\ no hypoglycemia). $\%_{\mathsf{bad}}$: proportion of traces violating $G \in [4,16]$. $\overline{FI}$ and $\overline{FI_w}$: mean $FI$ and $FI_w$, resp. $\%_{FI > FI^{\max}}$, $\%_{FI_w > FI_w^{\max}}$ and $\%_{FI > FI^{\max} \vee FI_w > FI_w^{\max}}$: proportion of traces violating, resp., either and both performance constraints. The best value for each index is highlighted in bold.}
\label{fig:perf_eval}
\end{figure}

% \begin{table}[ht!]
% \centering
% \begin{tabular}{|c||c|c|c|}
% \hline
% \# & Safety & $FI\times10^{-6}$ & $FI_w\times10^{-9}$ \\
% \hline
% \hline
% $C_0^1, C_0^2$ & Unsafe & 26.2335 & 847.5063 \\
% $C_1$ & Safe & 3.89437 & 114.49821 \\
% $C_1$ & Unsafe & 3.95773 & 81.61823\\
% $C_2$ & Safe & 3.96117 & 74.90655\\
% \hline
% \end{tabular}
% \vspace{2ex}
% \caption{Evaluation of the synthesized PID controllers on three meals of 50, 100 and 70 grams
% consumed in 300-minute intervals over 24 hours.}
% \label{table:result-performance}
% \end{table}

% \begin{figure}%[ht!]
% \centering	
% \includegraphics[width=\columnwidth]{img/Fig3.pdf}
% \caption{Simulated blood glucose level for the synthesized controllers with three meals of 50, 100 and 70 grams
% consumed in 300-minute intervals over 24 hours.}
% \label{fig:pid-evaluation}
% \end{figure}

% \begin{figure}
% \centering
% \subfigure[Blood glucose]{\includegraphics[width=.48\textwidth]{img/Fig3.pdf}
% }
% \subfigure[Insulin administration ($u(t)+u_b$)]{\includegraphics[width=.48\textwidth]{img/Fig4.pdf}
% }
% \caption{Behavior of synthesized controllers in a simulation consisting of three meals of 50, 100 and 70 grams consumed in 300-minute intervals over 24 hours.}
% \label{fig:simulation}
% \end{figure}

% !TEX root =  main.tex

\section{Related Work}
\label{sec:related_work}
%Related literature falls into two categories: synthesis of PID controllers for nonlinear and stochastic systems, and PID and robust control for type 1 diabetes. 
% speak avout PID + pancreas directly in the evaluation section
A number of approaches have been proposed for the PID control of nonlinear and stochastic systems. Among these, nonlinear PID control~\cite{su2005design} defines the controller gains as nonlinear functions of the system state, even though performance guarantees have been established only for subclasses of nonlinear systems. 
%Adaptive PID (APID) control~\cite{fliess2013model} supports nonlinear plants with partly unknown dynamics, estimated online from the system output. This estimation implies that requirements cannot be guaranteed by design and can be strongly affected by the sampling time. In contrast, our approach can synthesize controllers with guaranteed performance for a large class of nonlinear systems (Lipschitz-continuous), and importantly, retains the full dynamics of the system. This allows for a fully model-based approach to controller synthesis, which is key in safety-critical applications, where, on the contrary, the model-free online tuning of APID is potentially dangerous. 
Adaptive PID (APID) control~\cite{fliess2013model} supports nonlinear plants with partly unknown dynamics, but no requirements can be guaranteed by design since the unknown dynamics is estimated via sampling the plant output. In contrast, we can synthesize controllers with guaranteed performance for a large class of nonlinear systems (Lipschitz-continuous) while retaining the complete system dynamics. This allows for a fully model-based approach to controller synthesis, which is key in safety-critical applications, where, on the contrary, the model-free online tuning of APID is potentially dangerous. 

%PID control for Markov jump systems, \ie\ where the plant is a linear system with stochastic coefficients, is solved as a convex optimization problem in~\cite{guo2005pid,he2011robust}, while in~\cite{duong2012robust}, robust PID control for stochastic systems is reduced to a constrained nonlinear optimization problem. %that aims at minimizing the cumulative expected error, where robustness is encoded as a constraint on the maximum variance of the system output. 
%Compared to these approaches, we support models where stochasticity is restricted to random (both discrete and continuous) and nondeterministic initial parameters. Our models are characterized, however, by much richer nonlinear dynamics.

PID control for Markov jump systems, \ie\ where the plant is a linear system with stochastic coefficients, is solved as a convex optimization problem in~\cite{guo2005pid,he2011robust}, while in~\cite{duong2012robust}, robust PID control for stochastic systems is reduced to a constrained nonlinear optimization problem. %that aims at minimizing the cumulative expected error, where robustness is encoded as a constraint on the maximum variance of the system output. 
Compared to these approaches, we support models where stochasticity is restricted to random (both discrete and continuous) parameters, with nondeterministic (\ie,
arbitrary) parameters and much richer nonlinear dynamics.
Another key strength of our method with respect to the above techniques is that design specifications are given in terms of probabilistic reachability properties. These provide rigor and superior expressiveness and can encode common performance indices for PID controllers~\cite{li2004cautocsd}, as shown in Section \ref{sec:implementation}. 

Other related work includes the Simplex architecture~\cite{Sha2001} where, whenever the plant is at risk of entering an unsafe state, the system switches from a high-performance advanced controller to a pre-certified (safe) baseline controller (with worse performance), leading to a potential trade-off between safety and performance. In our approach, performance and safety are instead equal cohorts in the synthesis process. 
Unlike Simplex, in the \emph{Control Barrier Function} (CBF) approach~\cite{AmesH14}, there is no baseline controller to fall back on: a CBF minimally perturbs a (possibly erroneous) control input to the plant so the plant remains in the safe region. As far as we know, neither Simplex nor CBFs have been designed with a stochastic plant model in mind. 

The controller synthesis problem under safety constraints (bounded STL properties in this case) is also considered in~\cite{Raman15}.  The main differences between this approach and ours is that they focus on Model Predictive rather than PID control, and their system model does not support stochastic parameters. There are a number of formal approaches (\eg, \cite{TronciTAC2017})
to control synthesis that consider the sample-and-hold schema typical of discrete-time controllers, but they do not yield PID controllers and cannot handle stochastic hybrid systems.
Verification of hybrid control systems with non-deterministic disturbances is considered in~\cite{mancini2013system} and solved through a combination of explicit model checking and simulation. However, unlike our method, it does not support controller synthesis and arbitrary probability distributions for the disturbances. 

%There has been a sizable amount of work on tools for formal analysis of probabilistic reachability, to mention~\cite{Sisat,SMCnd,UPPAALSMCtut,ZhangSRHH10,FraenzleHHWZ11,filieri2014statistical,BortolussiMS15,BartocciBBMS16},

There has been a sizable amount of work on tools for formal analysis of probabilistic reachability,
although they all have limitations that make them unsuitable for our approach. SiSAT~\cite{Sisat}
uses an SMT approach for probabilistic hybrid systems with discrete nondeterminism, while continuous 
nondeterminism is handled via Monte Carlo techniques only~\cite{SMCnd}; UPPAAL~\cite{UPPAALSMCtut} uses 
statistical model checking to analyze nonlinear stochastic hybrid automata; ProHVer~\cite{ZhangSRHH10} 
computes upper bounds for maximal reachability probabilities, but continuous random parameters are analyzed 
via discrete over-approximations~\cite{FraenzleHHWZ11}; U-Check~\cite{BortolussiMS15} enables 
parameter synthesis and statistical model checking of stochastic hybrid systems~\cite{BartocciBNS15}). 
However, this approach is based on Gaussian process emulation and optimisation, and provides only statistical 
guarantees and requires certain smoothness conditions on the satisfaction probability function.
%In~\cite{Sankaranarayanan:2012:SII:2415548.2415569}, the authors employ property falsification
%to assess the risks related to the failures or the user's mis-usage of insulin infusion pumps. 

Other approaches to solving SMT problems over nonlinear real arithmetic include the complete (over polynomials), yet computationally expensive, cylindrical algebraic decomposition method implemented in solvers like Z3~\cite{de2008z3}, as well as a recent method~\cite{cimatti2017invariant} based on the incremental linearization of nonlinear functions. However, none of these support ODEs and transcendental functions.

\section{Conclusions and Future Work}
% \section{Conclusions}
\label{sec:conclusion}
The design of PID controllers for complex, safety-critical cyber-physical systems is challenging due to the hybrid, stochastic, and nonlinear dynamics they exhibit. Motivated by the need for high-assurance design techniques in this context, 
in this paper we presented a new method for the automated synthesis of PID controllers for stochastic hybrid systems from probabilistic reachability specifications. In particular, our approach can provide rigorous guarantees of safety and robustness for the resulting closed-loop system, while ensuring prescribed performance levels for the controller. We demonstrated the effectiveness of our approach on an artificial pancreas case study, for which safety and robustness guarantees are paramount.

As future work, we plan to study more advanced variants of the PID design such as nonlinear PID controllers, 
% as well as investigate SMT-based methods for the synthesis of discrete-time PID controllers. 
as well as investigate how common PID tuning heuristics can be integrated in our automated approach to speed up the search for suitable controllers.

%{\small
\paragraph{\bf Acknowledgements}
Research supported in part by
EPSRC (UK) grant EP/N031962/1, % Paolo
FWF (Austria) S 11405-N23 (RiSE/SHiNE), % Ezio 
AFOSR Grant FA9550-14-1-0261 % ARRIVE
and NSF Grants IIS-1447549, % CR, smolka: Big Data 
CNS-1446832, % smolka: CyberCardia
CNS-1445770, % smolka: FDA SIR
CNS-1445770, % smolka: CPS Medical Device WorkshopN
CNS-1553273, CNS-1536086, CNS 1463722, and IIS-1460370. %Shan

\bibliographystyle{abbrv}
\bibliography{memocode2017}

%\pagebreak
\appendix
\section{Proof of Proposition \ref{prop:formal}}\label{app:proof1}
\begin{proof}
Let $[\mathbf{k}_m]$ be the parameter box from which $\mathbf{k}^*$ was selected, and let $[P_m] = [P^\bot_m,P^\top_m]$ be the corresponding probability enclosure with minimal midpoint. In the best case, $[P_m]$ has also the least lower bound, implying that $P^* \in [P_m]$ and in turn that, $\mathbf{Pr}(\mathbf{k}^*) \leq P^* + \epsilon$. In the worst case, there exists another enclosure, $[P_M] = [P^\bot_M,P^\top_M]$ with a better lower bound than $[P_m]$, i.e, with $P^\bot_M < P^\bot_m$. This implies that the actual minimal probability might be in $[P_M]$ and not in $[P_m]$, which induces a worst-case probability error of $P^\bot_m - P^\bot_M$, leading to $\mathbf{Pr}(\mathbf{k}^*) \leq P^* + \epsilon + P^\bot_m - P^\bot_M$. Now note that $P^\bot_m - P^\bot_M$ cannot exceed the half length of $[P_M]$, because otherwise $[P_M]$ would be the enclosure with the lowest midpoint. It follows that $\mathbf{Pr}(\mathbf{k}^*) < P^* + \epsilon + \epsilon/2$. 
%(the equality would hold only in the denegerate case that $[P_m]$ is a single point equals to the midpoint of $[P_M]$. However, this would imply that the error due to $[P_m]$ is null and that $\mathbf{Pr}(\mathbf{k}^*) \leq P^* + \epsilon/2$
\end{proof}

\section{Gluco-regulatory ODE model}\label{app:model}

\begin{equation} \label{eq:model-odes}
\begin{split} 
\dot{Q_1}(t) &= -F_{01} - x_1Q_1 + k_{12}Q_2-F_R+EGP_0(1-x_3)+ 0.18U_G;\\
\dot{Q_2}(t) &= x_1Q_1-(k_{12} + x_2)Q_2; ~~~U_G(t) = \frac{D_GA_G}{0.18t_{maxG}^2}t e^\frac{-t}{t_{maxG}};\\
G(t) &=\frac{Q_1(t)}{V_G}; ~\dot{S_1}(t) = u(t) + u_b - \frac{S_1}{t_{maxI}}; ~\dot{S_2}(t) =\frac{S_1-S_2}{t_{maxI}};\\
\dot{I}(t) &= \frac{S_2}{t_{maxI}V_I}-k_e I; ~~~\dot{x_i}(t) =-k_{a_i}x_i+k_{b_i}I; ~~~(i = 1,2,3) \\
\end{split}
\end{equation}

The model consists of three subsystems:
\begin{itemize}
	\item {\em Glucose Subsystem}: it tracks the masses of glucose (in mmol) in the accessible ($Q_1(t)$) and non-accessible ($Q_2(t)$)
    compartments, $G(t)$ (mmol/L) represents the glucose concentration in plasma, $EGP_0$ (mmol/min) is the endogenous glucose production rate and
    $U_G(t)$ (mmol/min) defines the glucose absorption rate after consuming $D_G$ grams of carbohydrates. $D_G$ represents the main external disturbance of the system.
    \item {\em Insulin Subsystem}: it represents absorption of subcutaneously administered insulin. It is defined by 
    a two-compartment chain, $S_1(t)$ and $S_2(t)$ measured in U (units of insulin),  where $u(t)$ (U/min) is the administration of insulin computed by the 
    PID controller, $u_b$ (U/min) is the basal insulin infusion rate and $I(t)$ (U/L) indicates the insulin concentration in plasma.
    \item {\em Insulin Action Subsystem}: it models the action of insulin on glucose distribution/transport, $x_1(t)$,
    glucose disposal, $x_2(t)$, and endogenous glucose production, $x_3(t)$ (unitless).
\end{itemize}
The model parameters are given in Table
\ref{table:params}. %, and the system can be found in \cite{Hovorka04}. 

% may be this will go into the Appendix
\begin{table}
\centering
\begin{tabular}{|c|c||c|c||c|c|}
\hline
\bf{par} & \bf{value} & \bf{par} & \bf{value} & \bf{par} & \bf{value} \\
\hline
\hline
$w$ & 100 & $k_e$ & 0.138 & $k_{12}$ & 0.066 \\ 
$k_{a_1}$ & 0.006 & $k_{a_2}$ & 0.06 & $k_{a_3}$ & 0.03 \\ 
$k_{b_1}$ & 0.0034 & $k_{b_2}$ & 0.056 & $k_{b_3}$ & 0.024 \\ 
$t_{maxI}$ & 55 & $V_I$ & $0.12\cdot w$ & $V_G$ & $0.16\cdot w$ \\ 
$F_{01}$ & $0.0097\cdot w$ & $t_{maxG}$ & 40 & $F_R$ & 0 \\
$EGP_0$ & $0.0161\cdot w$ & $A_G$ & 0.8 & & \\
\hline
\end{tabular}
\vspace{2ex}
\caption{Parameter values for the glucose-insulin regulatory model. $w$ (kg) is the body weight.}
\label{table:params}
\end{table}

\end{document}